\documentclass[10pt, english,pra,aps,twocolumn,floatfix,superscriptaddress]{revtex4-2}
\usepackage[T1]{fontenc}
\usepackage[latin9]{inputenc}
\usepackage{color}
\usepackage{babel}
\usepackage{verbatim}
\usepackage{refstyle}
\usepackage{amsthm}
\usepackage{amsmath}
\usepackage{graphicx}
\usepackage{algorithmicx}
\usepackage[ruled,linesnumbered]{algorithm2e}
\usepackage{color}
\usepackage[pdfstartview={FitH},bookmarks=false]
 {hyperref}

\theoremstyle{plain}
\newtheorem{thm}{\protect\theoremname}
  \theoremstyle{remark}

\providecommand{\remarkname}{Remark}
\providecommand{\theoremname}{Theorem}
\newtheorem{theorem}{Theorem}[section]

\newtheorem{lemma}[theorem]{Lemma}

\makeatletter

%%%%%%%%%%%%%%%%%%%%%%%%%%%%%%%%%%
\newcommand{\ket}[1]{\left| #1\right\rangle}        % ket vector
\newcommand{\bra}[1]{\left\langle #1\right|}        % bra vector

\newcommand{\dimHB}{d}%\mathcal{D}_{A} }

\newcommand{\weigt}{\omega}

\newcommand{\diagS}{\alpha}
\newcommand{\diagHB}{\beta}

\newcommand{\Sys}{T}
\newcommand{\A}{A}

\newcommand{\HBAC}{Heat-Bath Algorithmic Cooling$\,$}
%%%%%%%%%%%%%%%%%%%%%%%%%%%%%%%%%%%%%%%
\begin{document}

\title{No-go Theorem of Purification}

\author{Sadegh Raeisi}\email[]{sraeisi@sharif.edu}
\affiliation{Department of Physics, Sharif University of Technology, Tehran, Iran}

\begin{abstract}
The Shannon's bound for compression is one of the key restrictions for the compression of quantum information. Here we show that the unitarity of the compression operation imposes new bounds on the compression that are more limiting than Shannon's compression bound. 
This translates to a no-go theorem for the purification of quantum states. For a specific case of a two-qubit system, our results indicate that it is not possible to distill purity beyond the maximum of the individual purities. 
We show that this restriction results in the cooling limit of the heat-bath algorithmic cooling techniques.
We formalize the limitations imposed by the unitarity of the compression operation in two theorems and use the theorems to show that the limitations of unitarity lead to the cooling limit of heat-bath algorithmic cooling. 
To this end, we introduce a new optimal cooling technique and show that without the limitations of the unitary operations, the new cooling technique would have exceeded the limit of Heat-bath algorithmic cooling. 
This work opens up new avenues to understanding the limits of dynamic cooling. 
\end{abstract}
\maketitle

Quantum mechanics predicts some peculiar and fascinating phenomena
that have been demonstrated with sophisticated experiments. 
Some of these quantum effects have even been utilized for 
applications and technologies  
 such as quantum computing and quantum sensing. 
However, realizations of these quantum experiments and technologies are challenging, mainly because quantum effects are often fragile and could be sensitive to imperfections of the implementation  \cite{wang2013precision}.

Techniques like quantum error correction
and fault tolerance  \cite{nielsen2002quantum} were 
invented to counteract some of these 
imperfections. These solutions usually focus only on the implementation of operations 
and not the state preparation.  
Also, they often rely on large supplies of 
high-quality quantum states.  
Yet, one of the main challenges is the inability to 
prepare high-quality quantum states. These states
are referred to as ``pure'' states and 
make one of the key ingredients 
of many quantum experiments and applications \cite{ben2013quantum}. 
For instance, the vast majority of quantum algorithms such as 
Shor's algorithm \cite{shor1994algorithms} or 
even quantum error correction and fault 
tolerance techniques require 
pure ancillary qubits \cite{nielsen2002quantum}. 
This is one of the reasons why quantum systems are cooled.

%%%%HBAC

\HBAC (HBAC) provides an alternative solution to this problem. 
These techniques use auxiliary degrees of freedom in the 
system to compress and transfer the entropy away from a 
target subsystem which is of interest to the quantum experiment. 

\begin{figure}
	\begin{centering}
	\includegraphics[width=.7\columnwidth]{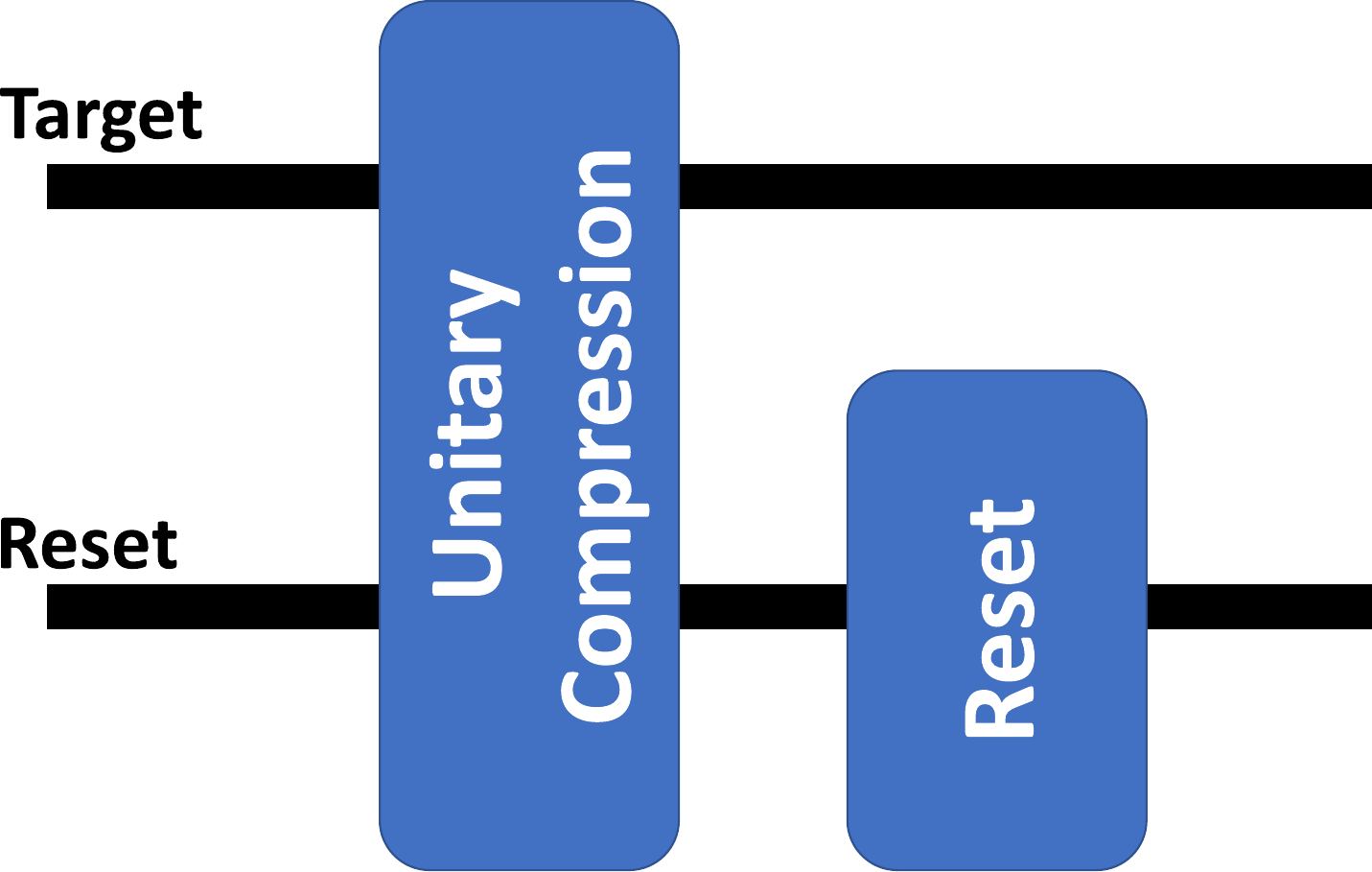}
	\par\end{centering}	
	\caption{\label{fig:Schematic-HBAC}
		Schematic Picture of HBAC. 
		The compression operation 
		compresses the entropy of the combined system 
		and transfers that to the reset element. 
		The compression operation increases the entropy 
		of the reset element. 
		The reset operation refreshes the reset element 
		and brings it back to its equilibrium or initial state which has lower entropy. 		
		}
\end{figure}

This idea was first proposed by Schulmann and Vazirani \cite{schulman_scalable_1998} for closed quantum systems. 
Later Boykin et al. extended this idea to open systems \cite{boykin_algorithmic_2002}. 
In particular, for a system where one of the auxiliary elements has a strong interaction with the heat-bath, they proposed to use the interaction to extract the entropy out of the system and into the heat-bath. 
This allows to cool beyond Shannon's bound for compression.

HBAC is particularly useful for spin-based quantum systems.  Often the thermal state of spin systems is highly mixed which is partly due to their small energy gaps. 
For instance, for Hydrogen nuclear spins in a 10-Tesla field and at room temperature, the Boltzmann distribution gives a polarization of  ($~10^{-5}$). This is only slightly more pure than a maximally mixed state.   
HBAC has recently been used for the enhancement of the polarization of the Nitrogen nuclear spins in Nitrogen-Vacancy(NV) centers in diamond \cite{zaiser2021experimental}. In this experiment, two carbon spins were used as the auxiliary elements to absorb the entropy from the Nitrogen spin through the compression operation of HBAC. This experiment was implemented at room temperature.

In \cite{schulman_physical_2005}, Schulman et al. introduced the optimal HBAC technique which they called the "Partner Pairing Algorithm (PPA)". However, they found that even PPA cannot always  converge to a fully pure state and there is still a limit. 
This was counter-intuitive. One would expect that in an open system setting, it should be possible to iteratively compress and transfer  entropy away from the target element to the auxiliary element and then to the bath repeatedly until all the entropy is extracted from the target element and it converges to a pure state.

Although \cite{schulman_physical_2005} showed that HBAC has a cooling limit, they did not find the limit. This was because PPA has a complex process and they could not find the asymptotic state of PPA. While the existence of the limit was verified numerically, the asymptotic state remained unknown for close to a decade. 

In 2015, Raeisi and Mosca solved this problem and established the asymptotic state of PPA \cite{raeisi2015asymptotic}. 
Using the asymptotic state, they also found the cooling limit of PPA. Since PPA is an optimal HBAC technique, the limit applies to all HBAC techniques.  More specifically, \cite{raeisi2015asymptotic} established the cooling limit of all HBAC techniques. 

Despite the establishment of the limit, the roots of the limit remained unclear. Note that Raeisi and Mosca found the asymptotic state by analyzing and proving some bounds for the iterations of PPA \cite{raeisi2015asymptotic}. While these bounds helped to establish the asymptotic state, they did not clarify why the overall cooling of the process is bounded. 

Note that this problem is different from the third law of thermodynamics. A similar paradigm to HBAC has been investigated for the derivation of the third law for quantum thermodynamics \cite{browne2014guaranteed, reeb2014improved, masanes2017general}. However, the research in this field is motivated by different sets of assumptions. The obvious distinction is that the focus of the research on the third law is on processes that converge to zero temperature. However, for the HBAC limit, the process is not always converging to the zero temperature, i.e. a completely pure state.

Here, we trace back the cause of the limit to the unitarity of the compressions operations. 
We show that the unitarity of operations limits the compression.  For instance, we show that with two qubits, it is not possible to compress and transfer the entropy of one of the qubits to the other one. As a result, for a two-qubit system, we prove that it is not possible to increase the purity of the qubits beyond individual purities. 
We formalize these limitations in terms of bounds 
on the purity after the unitary compression operation. We then show that these bounds lead to the HBAC limit. We also show that if the compression could exceed the limit, it would have been possible to cool beyond the cooling limit of HBAC. To this end, We introduce a new HBAC technique that helps see the effects of the compression bound more clearly.

The structure of this paper is as follows. 
We first describe our notation. 
Next, we proceed to introduce the new HBAC technique. 
We then prove two limitations of unitary compression operations. 
Next, we show that without these limitations, 
the new HBAC technique would exceed the cooling limit of HBAC and that because of these limitations, it is bounded by the cooling limit of HBAC. 
This concludes our result and shows that the HBAC limit is imposed by the limitations of the unitarity of the compression operations.

%%%Notation
\subsubsection*{Notation}
For HBAC, often the system is divided into two subsystems.  
The target subsystem is referred to as the ``computation element'' and the subsystem that is in strong interaction with heat-bath 
is called the ``reset element''. 
Often the system is assumed to 
be a combination of qubits in which case, these would be referred to as the computation and reset qubits. 
For most of this work, we assume that the system comprises $ n $ 
qubits,  $n-1$ computation qubits and one reset. 
We index the qubits from $ 1 $ to $ n $ and refer to them as $ Q_j $ for the $ j $th qubit. 
Note that, although all the computation qubits are purified to some extent, some are more purified than the others and in some sense, some computation qubits act as auxiliary elements for the other computation qubits.

The state of the target and reset elements are 
described by non-negative density matrices in Hilbert spaces $ \mathcal{H}_T $ and $ \mathcal{H}_R $ respectively. 
We use $ \lambda\left( \rho\right)  $ to refer to the eigenvalues of the density matrix $ \rho $,  
$ \bar{\lambda} \left( \rho\right)$ to represent the sorted array of $ \lambda\left( \rho\right)  $  
in decreasing order and
$ \lambda_i $ for the $ i $th element of 
$ \bar{\lambda}\left( \rho \right)  $. 
Also, we use subscripts 
$ T $ and $ R $ to refer to the target and reset 
elements respectively.

HBAC comprises two main operations. 
First is the compression operations which compresses and transfers the entropy away 
from the target(computation) elements. 
This is done by a unitary operation $U$. 
Mathematically the compression is 
\begin{equation}\label{eq:HBAC_compression}
\rho_{T,R}\rightarrow U \rho_{T,R} U^{\dagger},  
\end{equation}
The compression operation increases the entropy of 
the reset element. So, it is followed by the ``reset'', 
where the reset element loses its accumulated entropy 
to the heat-bath. Often, this is a relaxation process 
that takes the reset element to its equilibrium state. 
This state is referred to as the ``reset state, $\rho_R$''.
For a single qubit  reset element, the reset state would be
\begin{equation}\label{eq:rho_e}
\rho_R(\epsilon_R) = \frac{1}{e^{\epsilon_R}+e^{-\epsilon_R}}
\left(\begin{array}{cc}
e^{\epsilon_R} & 0\\
0 & e^{-\epsilon_R}
\end{array}\right), 
\end{equation}
where $\epsilon_R$ is referred to as the ``polarization'' of the reset state and determines the purity of the reset state. 
For a thermal reset, this is set by the Boltzmann distribution and depends 
on the temperature and energy gap between the two qubit states. 

In general, any qubit state in its diagonal basis can be written in this form, and the polarization can be defined as 
\begin{equation} \label{eq:polarization}
\epsilon\left( \rho\right)  = \frac{1}{2}\log(\frac{\bar{\lambda}_1}{\bar{\lambda}_{2} }),   
\end{equation}
where $\bar{\lambda}_1$ and $\bar{\lambda}_2$ are the eigenvalues of the state $\rho$ in decreasing order. Polarization can be used as a measure of purity.

For a maximally mixed qubit state $ \epsilon=0 $ and for a pure qubit state $ \epsilon\rightarrow\infty $.

The definition of the polarization  can be 
extended to a $ d $-level system as
\begin{equation} \label{eq:polarization-qudit}
\varepsilon\left( \rho\right)  = \frac{1}{2}\log(\frac{\bar{\lambda}_1}{\bar{\lambda}_{\dimHB} }),   
\end{equation}
where $d$ is the dimension of the Hilbert space % of the reset element 
and $\lambda_1$ and $\lambda_d$ are the smallest and largest eigenvalues of the density matrix. 
Note that for $ d > 2 $ this 
is not necessarily a good purity measure and it is in general more challenging to define a measure of 
purity for 	qudits \cite{gour2015resource}. 

Alternatively, entropy can be used as a measure of purity. 
The entropy can be expressed as 
\begin{equation}
S(\rho) = -\sum_i \lambda_i \log(\lambda_i).
\end{equation}
It is easy to show that for a qubit, the entropy can be expressed in terms of the polarization as $S(\rho) = -(2 e^{2 \epsilon} \epsilon)/(1 + e^{2 \epsilon}) + log(1 + e^{2 \epsilon})$. This is a strictly decreasing function which means that increasing the polarization is equivalent to decreasing the entropy. 
In this paper, we sometimes refer to the cooling process as increasing the polarization or decreasing the entropy.

The combination of the compression and the reset
can be mathematically described as 

\begin{equation}\label{eq:dynamic_cooling}
%\CoolingC \left( \rho_{T,R}\right) = 
\rho_{T,R}^{out} = 
\text{Tr}_{R}\left(U \rho_{T,R} U^{\dagger}\right)\otimes \rho_R(\epsilon_R),  
\end{equation}
where $ U $ is a unitary compression operation and  
$ \text{Tr}_{R} $ is the partial trace taken over the 
reset element. 
Figure (\ref{fig:Schematic-HBAC}) gives 
a schematic picture of the process. 

The Partner Pairing Algorithm (PPA) strategy that was introduced in \cite{schulman_physical_2005} uses sorting for the compression operation and they showed that it is the optimal compression operation. 
More specifically, in each iteration, PPA sorts 
the diagonal elements of the density matrix in decreasing order. Schulman et al. also showed that this process does not always converge to a pure state. However, due to the complexity of the sort operations that are constantly changing, the asymptotic state remained unknown for close to a decade. 
In 2015, Raeisi and Mosca \cite{raeisi2015asymptotic} solved this problem and showed that the PPA process converges to the following  diagonal state %with the following diagonal elements 
\begin{equation}
    \text{diag}(p_0 \{1, e^{-2 \epsilon_R}, e^{-4 \epsilon_R}, \dots, e^{-2 (2^{(n-1)} - 1) \epsilon_R}  \}) \otimes \rho_R(\epsilon_R),
\label{Eq:rho_HBAC}
\end{equation}
where $\epsilon_R$ is the reset polarization, $ p_0 $ is the normalization factor, and diag represents a diagonal matrix of its inputs. They showed this result for any value of the reset polarization $\epsilon_R$ as well as any number  of computation and reset qubits. They also extended the results for general multi-level systems. 
Using the asymptotic state, they established the following asymptotic bound for the polarization of the target qubit
\begin{equation}
\epsilon_T^{\infty} = 2^{(n-2)} \epsilon_R.
\end{equation}

%%%Notation

Next we introduce a new HBAC algorithm which we refer to as the ``recursive HBAC technique''. This algorithm can  in principle converge to the HBAC limit, although it may not be optimal in terms of the number of steps required to get to the vicinity of the asymptotic state. However, it makes a nice toy HBAC algorithm to show why the cooling process cannot go beyond the cooling limit.

\subsubsection*{Recursive HBAC algorithm}
%%%New HBAC algorithm

\begin{algorithm}[t]
\SetArgSty{textnormal}
\SetKwInOut{Input}{Input}
\SetKwInOut{Output}{Output}
\DontPrintSemicolon
\Input{Number of qubits $ n $, Polarization of the 
		reset qubit $ \epsilon_0 $, $\delta $}
	\Output{Final state after recursive purification }
	\DontPrintSemicolon
	\SetKwFunction{FMain}{Main}
	\SetKwProg{Fn}{RecursivePurify}{:}{}
	\Fn{\FMain{$n$, $\epsilon_{0}$, $\delta$}}{
%\Begin{
		$ \rho_{\text{T,A}} = \rho_{\epsilon_0}^{\otimes n} $\;
		\eIf{n=2}{
			\KwRet$ \rho_{\epsilon_0}^{\otimes 2} $\;
			
		}{
			\While{ $ \| \rho_{T, A}-\rho^{\infty}\left( n,\epsilon_0\right) 
				\|_1 \geq \delta $}{
				$ \rho_{T, A} = sort( \rho_{T, A} ) $\;
				$ \rho_{T} = \text{Tr}_{A}\left(\rho_{T, A}\right)   $\;
				$ \rho_{T, A} = \rho_{T} \otimes \text{RecursivePurify}\left( n-1,\epsilon_0, \delta\right) $\;
				\KwRet $ \rho_{T, A}$ \;
			}
		}
	}
\caption{Recursive HBAC} \label{recsv_HBAC}	
\end{algorithm}

We take $ Q_1 $ to be the reset qubit with 
reset polarization of $ \epsilon_{R}$ and the rest of the qubits to be the computation qubits. 
We also take $ Q_n $ to be the main target. 
This means that although the polarization of all the computation qubits increases, we focus on $ Q_n $. 
We use $ \epsilon_j $ to refer to the polarization 
of qubit $ Q_j $.

Here we introduce a recursive algorithm for 
HBAC. 
The main idea is that we increase the purity of qubits one at a time and 
then use the combination of the cooled qubits 
as the reset for the next qubit. 
More specifically, the recursive algorithm starts with the 
first two qubits, $ Q_1 $ and $ Q_2 $. 
For the first part, we swap the state of 
$ Q_1 $ and $ Q_2 $ which polarizes $ Q_2 $ 
to  $ \epsilon_{R} $. 
We then reset $ Q_1 $.  
Now we use the combination of $ Q_1 $ and $ Q_2 $ 
as the reset element for the next qubit. 
We apply one round of purification. 
This would reduce the purity of $ Q_1 $ and $ Q_2 $. 
So they should be re-purified which involves 
resetting $ Q_1 $ and cooling $ Q_2 $ again. 
After that, the purification of $ Q_3 $  with re-purified 
$ Q_1 $ and $ Q_2 $ as the reset element is repeated. 
We keep doing this until $ Q_3 $ converges to 
its limit % of  $ 2\epsilon_{0} $ 
i.e.  stops changing within some threshold $ \delta $. 
Now we can move to the next qubit.  
Similarly, all the qubits can be purified this way. 

Since the algorithm only asymptotically gets to 
the limit, we need to add a parameter, $ \delta $ to specify how close we want to get to the asymptotic 
state. We refer to the asymptotic state of 
HBAC as $ \rho^{\infty}\left( n,\epsilon_0\right)  $. 
We also assume that $ n>1 $ and use $ \| A \|_1 $
for the norm one of an operator. 
The following pseudo-code in Algorithm (\ref{recsv_HBAC}) gives a more concrete description of the algorithm. 

%Pseudocode

This is a recursive algorithm because at each iteration 
of the algorithm, all the 
previous qubits should be recursively purified.

The intuition behind the algorithm is that, for each qubit, we want to first increase the purity of the reset element as much as possible and then use the purified reset element to cool the target qubit. But the reset element itself contains multiple qubits. So we break the reset element into one target qubit and a reset element. We then repeat this for the new reset element. 
Figure (\ref{fig:HBAC_limit}) depicts this intuition.

\begin{figure}[t]
	\begin{centering}
		\includegraphics[width=\columnwidth]{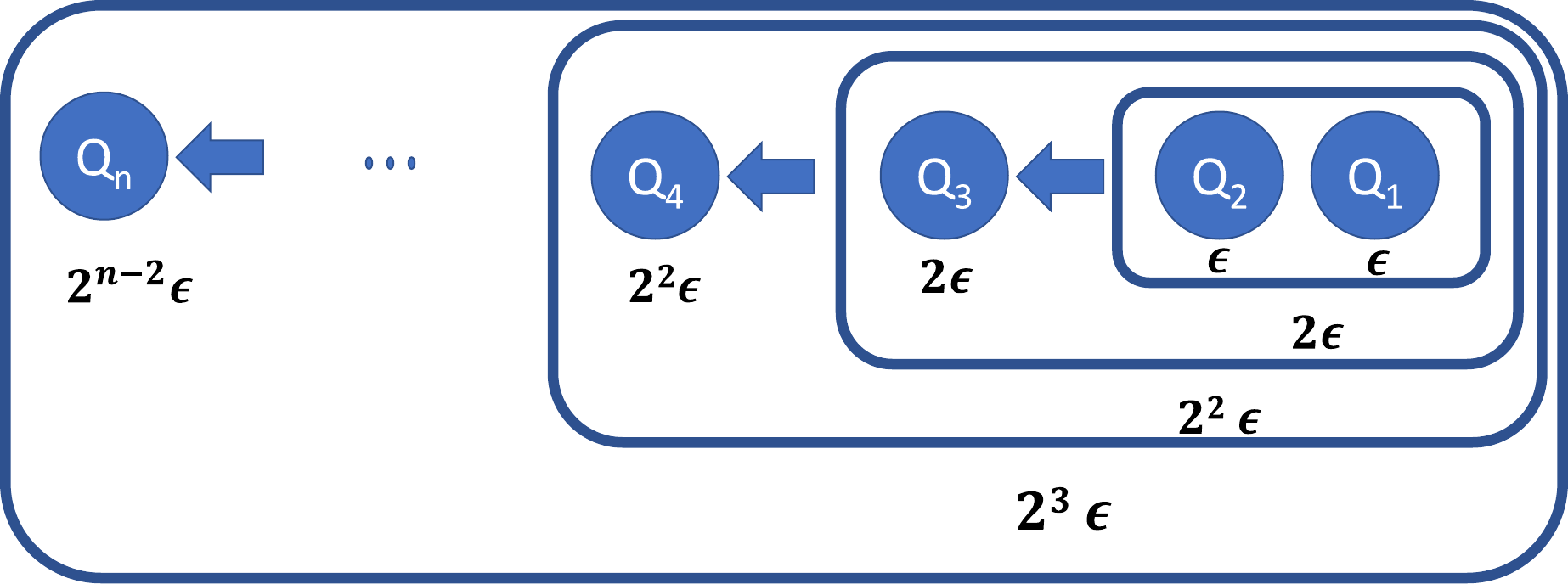}
		\par\end{centering}
	
	\caption{\label{fig:HBAC_limit}
		Schematic depiction 
		of the recursive HBAC technique.   
		The circles and boxes represent target qubits 
		and reset elements at each stage. 
		As we move to the left, 
		target qubits become part of the reset element
		for the next target qubit. 
		This figure also shows  how the bound 
		for the cooling in this algorithm is derived. 
		Each box (except for the most inner one) 
		includes a qubit, which is the target, 
		and another box which serves as the auxiliary 
		element. From theorem  (\ref{thm:cooling_power}), 
		from each qubit to the next one, the polarization 
		bound of the reset element doubles which 
		gives the exponential growth of the bound. 
	}
\end{figure}
This algorithm, like any other HBAC algorithm, is bounded by the cooling limit of HBAC established in \cite{raeisi2015asymptotic}.
To find and understand the bound for this specific algorithm, we prove two theorems about the compression unitary operations.

%%%New HBAC algorithm

\subsubsection*{Limitations of Unitary Compression}

%%% Thm for Unitary operations
We first investigate the optimal compression unitary, 
i.e. any operation that maximizes the purity 
of the output reduced density matrix of the target 
element, i.e. $ \rho_{T}^{out} = \text{Tr}_{R}\left(U \rho_{T,R} U^{\dagger}\right)$. 

For any unitary compression operation, 
there is a class of unitary operations that give the
same purity for the output target state. This class is generated by multiplying the compression unitary with local unitary operations. 
So to find the cooling 
bounds, without loss of generality, we can
focus on channels which preserve the diagonal 
basis of the target element and 
keep the output state $ \rho^{out}_T $ diagonal.
In other words, for any compression unitary that gives a non-diagonal density matrix
for the target element, it is always possible to combine it with a 
local unitary on the target that keeps $ \rho_T $ diagonal and  
leads to the same purity for the target element.

To get to our main result, we use the following lemma which indicates that the 
optimal compression unitary
can be chosen to be a permutation. 

\begin{lemma}
	Assume that we are given a target and an auxiliary system with 
	dimensions 2 and $\dimHB$ respectively. 
	Also assume that their initial states are given by
	$\bar{\lambda}\left( \rho^T \right) = \{\diagS, 1-\diagS\}$ and 
	$\bar{\lambda}\left( \rho^A \right) = \{\diagHB_1, \diagHB_2, \cdots,\diagHB_{\dimHB}\}$. 
	Given the quantum channel in equation (\ref{eq:dynamic_cooling}), 
	the optimal compression operation can always be chosen to be a permutation. 
\end{lemma}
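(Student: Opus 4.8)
The plan is to recast the purification optimization as a linear program over a polytope and then show that its optimizer is a permutation. By the reduction stated just before the lemma, I may assume that $\rho^{out}_T$ is kept diagonal in the target's computational basis, and (reordering the basis if needed) that the larger population sits in $\ket{0}$. Since for a qubit $\purity(\rho^{out}_T)=\tfrac12\log\!\big((\rho^{out}_T)_{00}/(1-(\rho^{out}_T)_{00})\big)$ is monotonically increasing in the larger diagonal entry, maximizing purity reduces to maximizing the single population $(\rho^{out}_T)_{00}$. First I would diagonalize the joint initial state $\rho_{T,A}=\rho^T\otimes\rho^A$: in the product basis it is diagonal with $2\dimHB$ eigenvalues $p_n$, each a product of a target eigenvalue ($\diagS$ or $1-\diagS$) and an auxiliary eigenvalue $\diagHB_j$. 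Writing $B_{mn}=|\bra{m}U\ket{n}|^2$ for the entrywise squared modulus of the purification unitary, $B$ is doubly stochastic, and the target population of interest becomes
\begin{equation}\label{eq:wdotp}
(\rho^{out}_T)_{00} = \sum_{j=1}^{\dimHB} \bigl(U\rho_{T,A} U^{\dagger}\bigr)_{(0,j),(0,j)} = \sum_n w_n\, p_n,
\end{equation}
with $w_n := \sum_{j=1}^{\dimHB} B_{(0,j),n}$.

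Next I would extract the constraints that double stochasticity of $B$ forces on the weight vector $w$. Non-negativity of $B$ gives $w_n\ge 0$; unit column sums give $w_n\le 1$; and summing $w_n$ over all $2\dimHB$ indices, using that each of the $\dimHB$ rows carrying target label $0$ sums to one, yields $\sum_n w_n=\dimHB$. Hence every achievable $w$ lies in the hypersimplex $\Delta=\{w:0\le w_n\le 1,\ \sum_n w_n=\dimHB\}$, so by \eqref{eq:wdotp} the attainable output population is bounded above by $\max_{w\in\Delta} w\cdot p$. Because $w\cdot p$ is linear, its maximum over the polytope $\Delta$ is attained at a vertex, and the vertices of this hypersimplex are exactly the $0/1$ vectors with precisely $\dimHB$ ones; a rearrangement argument then places those ones on the $\dimHB$ largest of the $p_n$.

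Finally I would saturate this bound with an explicit permutation, closing the gap between the relaxation and the physically realizable unitaries. A $0/1$ optimal vertex $w^{\star}$ with $\dimHB$ ones is reproduced by any permutation matrix that routes the $\dimHB$ columns carrying the $p_n$ selected by $w^{\star}$ into the target-$0$ rows and the remaining $\dimHB$ columns into the target-$1$ rows; such a matrix is unitary and its squared-modulus matrix equals $w^{\star}$ in the relevant partial column sums. Since permutations are a subset of the admissible purification unitaries, the relaxed upper bound and the true maximum coincide and are both attained by a permutation, which proves the lemma.

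The step I expect to be the main obstacle is this last one: for $\dimHB\ge 3$ the squared-modulus matrices of unitaries (unistochastic matrices) form a \emph{strict} subset of the doubly stochastic matrices, so the polytope relaxation is not tight in general and one cannot simply invoke Birkhoff--von Neumann to realize an arbitrary optimal $w$. The resolution is that I never need a generic doubly stochastic $B$, only the particular optimal vertex, and that vertex is itself a permutation, which is automatically unistochastic; the relaxation is therefore tight exactly where it matters. A secondary point to treat carefully is tie-breaking when several $p_n$ coincide, which merely enlarges the set of optimal permutations and leaves the conclusion unchanged.
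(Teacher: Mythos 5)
Your proposal is essentially the paper's own argument: the paper likewise defines weights $\weigt_z=\sum_{i=1}^{\dimHB}|u_{i,z}|^2$, derives the constraints $0\le\weigt_z\le1$ and $\sum_z\weigt_z=\dimHB$ from unitarity, and concludes that the linear objective is maximized by assigning weight one to the $\dimHB$ largest eigenvalues, realized by a permutation. Your explicit closing of the relaxation gap (noting that the optimal vertex is itself a permutation and hence unistochastic, so one need not worry that unistochastic matrices are a strict subset of doubly stochastic ones) is a point the paper glosses over, and is a welcome tightening of the final step.
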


\begin{proof}
	Without loss of generality, we can assume 
	that the initial state of the full system is diagonal. 
	Also as explained above, the optimal compression operator 
	can be chosen such that it
	keeps the reduced density matrix of the target diagonal. 

	We write the compression unitary $U$ as
	\[U = \sum_{i,j} u_{i,j} \ket{i}_{T,A}\bra{j} \] 
	with $i$ and $j \in \{1, 2,  \cdots, 2\dimHB\}$, enumerating
	over the full basis of the target and auxiliary elements.

	For the first $\dimHB$ elements (first block), 
	$\ket{i}_{\Sys,\A} = \ket{ 0 }_{\Sys} \ket{ i }_{\A}$ and for 
	the second $\dimHB$ elements (second block), 
	$\ket{i}_{\Sys,\A} = \ket{ 1 }_{\Sys} \ket{ i-d }_{\A}$. 
	The outcome of the compression can be calculated as
	\[ \rho^{out}_{T,A}  = \sum_{i,j} \left(\sum_{z}{ u_{i,z}
		\lambda_z\left( \rho_{T,A}\right)  u^*_{j,z}}\right)  \ket{i}\bra{j}. \]
	
	Since the output state of the target element is diagonal, 
	we can focus on the diagonal density 
	matrices for calculations of the purity. 
	The diagonal elements are given by $i=j$, i.e.  
	\[ \left( \rho^{out}_{T,A}\right)_{i,i}  =  \left(\sum_{z}^{2\dimHB}{ \mid u_{i,z}\mid^2
		\lambda_z\left( \rho_{T,A}\right)  }\right). \]
	For the output target density matrix, 
	we get
	\begin{equation}\label{eq:output_purity}
	\alpha^{out} = \sum_{i=1}^{\dimHB}\sum_{z=1}^{2\dimHB}{\mid u_{i,z}\mid^2 \lambda_z\left( \rho^{out}_{T,A} \right) }=\sum_{z=1}^{2\dimHB}{\weigt_z \lambda_z\left( \rho^{out}_{T,A} \right) },
	\end{equation}	
	where 
	\[\weigt_z =  \sum_{i=1}^{\dimHB} \mid u_{i,z}\mid^2 \]
	defines some weights. 
	Note that the sum goes over the first 
	$\dimHB$ elements. 
	To maximize the purity $ \epsilon\left( \rho^{out}_{T,A} \right)  $, the compression unitary or more 
	specifically the weights  $ \weigt_z $ should be 
	set such that the $ \alpha^{out} $ in 
	equation (\ref{eq:output_purity}) is maximized. 
	
	The unitarity of $ U $ implies that 
	$0\leq \weigt_z\leq 1$. 
	Also the sum over all $2\dimHB$ weights would 
	be  
	\[\sum_{z=1}^{2\dimHB}\weigt_z = \sum_{i=1}^{\dimHB} \sum_{z=1}^{2\dimHB} \mid u_{i,z}\mid^2=\sum_{i=1}^{\dimHB} 1 =\dimHB. \]
	This has an important implication. The optimal 
	compression operation $U$ should be chosen 
	such that  in equation (\ref{eq:output_purity}),
	the larger elements get 
	the largest possible weights. 
	Considering $ \weigt_z\leq 1 $ and that 
	$\sum_{z=1}^{2\dimHB}\weigt_z =\dimHB$, 
	the weights of the first $ d $ elements of  $\bar{\lambda}\left( \rho_{T,A} \right)$ should be one and the rest should be zero. 
	This indicates that the matrix elements of the optimal operation are
	either zero or one. Note that any unitary operations that its matrix elements are only zero or one, can only have a single non-zero element in each row or column because it has to preserve the norm. 
	This means that the operation would be a permutation matrix. 
\end{proof}

This lemma indicates that, there exists an optimal compression
such that starting from a diagonal density matrix, 
the compression operation keeps the density matrix diagonal and only the order of 
diagonal elements would change. 
Note that this permutation is not unique. 
One clear choice is the sort operations, i.e. the operation that sorts  the eigenvalues of the density matrix, $ \lambda_i $s. 
We will use the sort operation for the rest of this paper. 

The following two theorems establish two limitations 
imposed by unitarity of the compression operations. 
Note that the compression operation acts between two subsystems, one of which is the target element. 
The other element could be the reset qubit or a combination of reset and some of the computation qubits (as in the recursive HBAC technique). 
For simplicity, we refer to the second 
subsystem as the auxiliary subsystem and use subscript 
$A$ to refer to it. 

Now we proceed to the no-go theorem for the two qubit purification setting. 

\begin{thm}\label{thm:no-go}
	The two-qubit purification no-go theorem: 
	Assume that we are given two qubits 
	for the target and the auxiliary systems with 
	$\bar{\lambda}\left( \rho_{\Sys} \right) = \{\alpha, 1-\alpha\}$ and 
	$\bar{\lambda}\left( \rho_{\A} \right) = \{\beta, 1-\beta\}$ and 
	that the state of the target element  
	after purification is given by 
	$\bar{\lambda}\left( \rho_{\Sys}^{out} \right) = 
	\{\alpha^{out}, 1-\alpha^{out}\}$. Then
	\begin{equation}\label{eq:no-go-qubit}
	\alpha^{out} \leq \max(\alpha, \beta).
	\end{equation} 
\end{thm}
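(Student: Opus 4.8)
The plan is to combine the permutation lemma with a short combinatorial optimization over the four joint eigenvalues. First I would invoke the preceding lemma to replace the general channel (\ref{eq:dynamic_cooling}) by a permutation acting on the diagonal of the product state $\rho_T \otimes \rho_A$. Since the sorted spectra are $\{\alpha,1-\alpha\}$ and $\{\beta,1-\beta\}$ with $\alpha,\beta \ge 1/2$, the joint diagonal consists of the four products $\alpha\beta$, $\alpha(1-\beta)$, $(1-\alpha)\beta$, and $(1-\alpha)(1-\beta)$, indexed by the basis states $\ket{t}\otimes\ket{a}$ with $t,a\in\{0,1\}$.

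Next I would observe that tracing out the auxiliary element sums this joint diagonal in pairs: the larger output eigenvalue $\alpha^{out}$ is the sum of the two entries occupying the two basis states whose target label is fixed to its first level. A permutation is free to move any two of the four products into those two slots, so maximizing the target eigenvalue reduces to choosing which two of the four products have the largest sum. Because $\alpha\beta$ is the largest of the four products and the smallest is $(1-\alpha)(1-\beta)$, the optimum places $\alpha\beta$ together with whichever cross term is bigger, giving $\alpha^{out}\le \alpha\beta + \max\!\left(\alpha(1-\beta),\,(1-\alpha)\beta\right)$.

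Finally I would evaluate the two cases using the identity $\alpha(1-\beta)-(1-\alpha)\beta = \alpha-\beta$. When $\alpha\ge\beta$ the second-largest product is $\alpha(1-\beta)$ and the sum collapses to $\alpha\beta+\alpha(1-\beta)=\alpha$; when $\beta\ge\alpha$ it is $(1-\alpha)\beta$ and the sum is $\alpha\beta+(1-\alpha)\beta=\beta$. In both cases the bound is exactly $\max(\alpha,\beta)$, which is the claim.

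The step I expect to be the main obstacle is the second one: carefully justifying that $\rho_T^{out}$ inherits its diagonal as pairwise sums of the permuted joint eigenvalues, and that the permutation freedom is rich enough to place any chosen pair in the target-fixed block. Once this pairing structure is pinned down, the optimization is immediate and the case analysis on $\alpha$ versus $\beta$ is routine. A minor point to verify is that putting the two largest products in the target block really does produce the larger eigenvalue (so that it is $\alpha^{out}$ and not $1-\alpha^{out}$), which holds since the two largest of four nonnegative numbers summing to one always sum to at least $1/2$.
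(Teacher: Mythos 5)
Your proposal is correct and follows essentially the same route as the paper: invoke the permutation lemma, list the four joint eigenvalues $\alpha\beta$, $\alpha(1-\beta)$, $(1-\alpha)\beta$, $(1-\alpha)(1-\beta)$, place the two largest in the target's first block, and split into cases according to the sign of $\alpha(1-\beta)-(1-\alpha)\beta=\alpha-\beta$. Your explicit check that the two largest products sum to at least $1/2$ (so the optimized quantity really is $\alpha^{out}$) is a small point the paper leaves implicit, but it does not change the argument.
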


\begin{proof}
	Without loss of generality, we assume that 
	the initial state is diagonal. We have 
	\[\lambda\left( \rho_{T,A}\right) =   \{\alpha\beta,\alpha(1-\beta), (1-\alpha)\beta, 
	(1-\alpha)(1-\beta)\}.\]
	Considering that the optimal unitary operator 
	should place the larger elements of $ \lambda \left( \rho_{T,A}\right) $
	in the first block, 
	there are two possibilities, if 
	\[\alpha(1-\beta)\geq (1-\alpha)\beta,\]% the 
	 $ \lambda \left( \rho_{T,A}\right) $ is already sorted and 
	we get  $\alpha^{out} = \alpha((1-\beta)+\beta)=\alpha$.
	And if 
	$ \alpha(1-\beta) < (1-\alpha)\beta $, then
	$\alpha^{out} = \beta((1-\alpha)+\alpha)=\beta$. So 
	the optimal purification gives 
	$ \alpha^{out} = \max\left( \alpha, \beta\right)  $. 
\end{proof}

This result can be used to bound the purity after the compression. 
Note that for two states $\rho$ and $\rho'$, if
$\bar{\lambda_1(\rho)} > \bar{\lambda_1(\rho')}$, then it for the second eigenvalue we have
%implies that 
\[\bar{\lambda_2(\rho)} = 
1 - \bar{\lambda_1(\rho)} \leq 
1 - \bar{\lambda_1(\rho')} = \bar{\lambda_2(\rho)}. \]
This indicates that $\epsilon(\rho) \leq \epsilon(\rho') $. 
Similarly, the condition in equation (\ref{eq:no-go-qubit}) can be used to derive the following bound on the purity after the compression 
\begin{equation}
\epsilon \left( \rho^{out}_{T}\right) \leq \max \left( \epsilon\left( \rho_{T}\right),\epsilon\left( \rho_{A}\right)\right),  
\end{equation}
i.e. the polarization of the output target qubit is bounded by the maximum 
of the initial polarization of the target and auxiliary elements. 

This theorem has a significant implication. 
It shows that for two qubits, there is no unitary operation 
that can compress and transfer entropy from the target qubit beyond the maximum of the polarization of the two qubits. 
For instance, assume that initially, the two qubits have polarization $\epsilon_1$ and $\epsilon_2\geq \epsilon_1$. Now assume that the first qubit is the target. This theorem shows that at best, one can swap the two qubits, and it is not possible to compress the entropy. We will show that this no-go theorem limits the cooling of HBAC.

The result of Theorem (\ref{thm:no-go}) can be generalized for cases where the auxiliary element is not a qubit and belongs to a $d-$dimensional Hilbert space with $d > 2$. This is the content of our next theorem. 

For the general case where $\dimHB\geq 2$, the 
eigenvalue string is of the form 
\begin{align*} % the "starred" equation environments produce no equation numbers
\lambda \left( \rho_{T, A}\right) =\hspace{5em} & \\\{
\underset{\text{Block 1}}
{\underbrace{\diagS\times \{\diagHB_1, \diagHB_2, \cdots, \diagHB_{\dimHB} \}}}  % if no alignment is needed, use the gather* instead of the align* env.
&, \underset{\text{Block 2}}
{\underbrace{(1-\diagS)\times \{\diagHB_1, \diagHB_2, \cdots,\diagHB_{\dimHB}\}}}\}.
\end{align*}
This array is not necessarily sorted, i.e. there could 
exist indices $i$ and $j$ such that 
$\diagS \diagHB_i < (1-\diagS) \diagHB_j$.
The optimal compression operation would switch the orders 
of these elements and make sure that the first block 
contains the largest ones. 
More precisely, 
the optimal compression operation would 
replace the $m$ last elements of the first block 
with the $m$ first elements of the second block, with 
$m$ some integer that is less than $ d/2 $ and 
that depends on the order of the array.
We refer to these as ``crossing''s. 
Figure (\ref{fig:Schematic-Crossing}) gives a schematic 
description of the crossings. 
\begin{figure}
	\begin{centering}
		\includegraphics[width=\columnwidth]{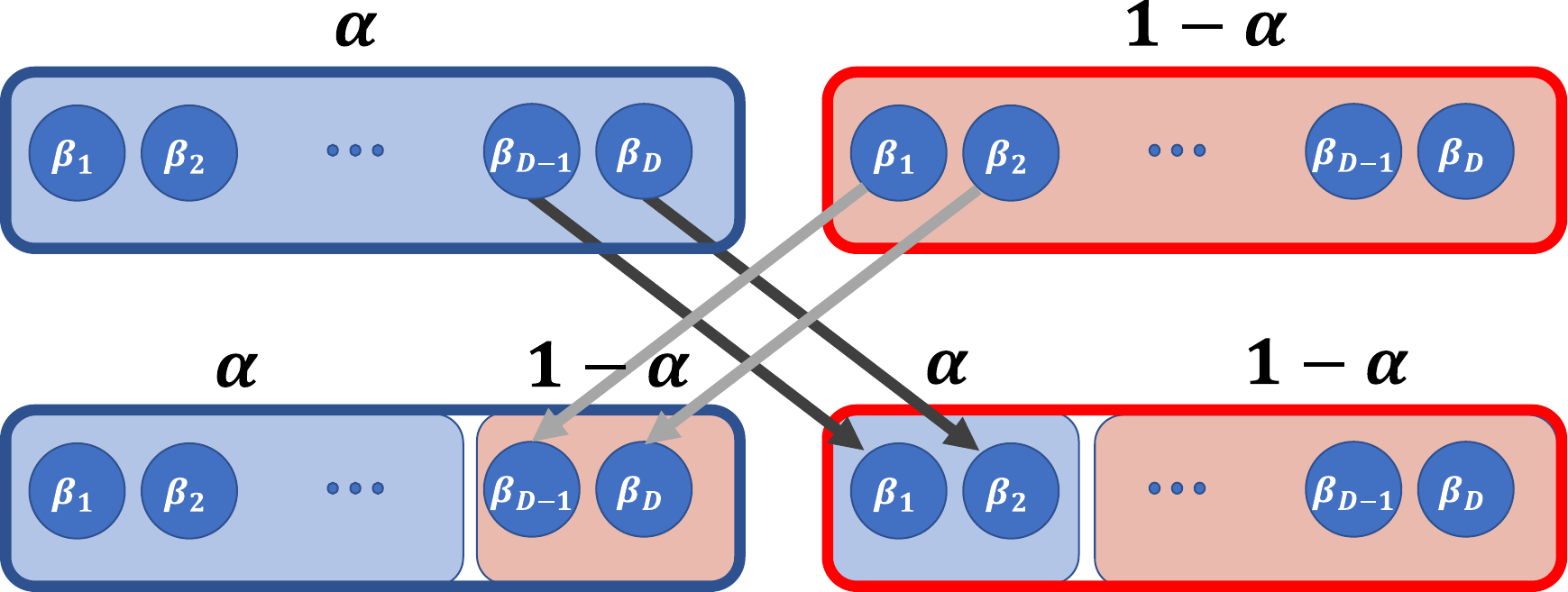}
		\par\end{centering}
	
	\caption{\label{fig:Schematic-Crossing}
		Schematic picture of crossings. 
		The top panel is the array of the eigenvalues 
		before the sort operation and 
		the one below is after the sort is applied.
		The two boxes represent the first and the 
		second block of the diagonal elements of the 
		density matrix $ \lambda\left( \rho_{T,A}\right)  $. 
		The elements in the first box are the first half 
		of the eigenvalues. 
		They can be calculated from multiplication of 
		first eigenvalue of the target state, 
		i.e. $ \alpha $ by  the eigenvalues of 
		the reset element, i.e. $\{\beta_i\}$. 
		Similarly, the second box is the second half the 
		eigenvalues of the density matrix which are
		 calculated from multiplying  
		$\{\beta_i\}$    by $ 1-\alpha$. 
		Some of the elements from the second 
		block are larger 
		and the optimal operation places them 
		in the first block to increase the 
		probability of the state $\ket{0}$ for the target 
		qubit (lower panel). 
		These crossings represent a key part of the purification dynamics. 
	}
\end{figure}

We break $ \bar{\lambda}\left( \rho_A \right)  $ in to three groups and for simplicity, we introduce the following parameters
\begin{align}
\delta_1 &= (\beta_1 + \cdots+\beta_{m}) \cr
\delta_2 &= (\beta_{m+1} + \cdots+\beta_{d-m}) \cr
\delta_3 &= (\beta_{d-m+1} + \cdots+\beta_{d}). 
\end{align} 
Note that $ \delta_1+\delta_2+\delta_3 = 1 $.

After the optimal compression operation,  
the first element of $ \rho_{T} $ changes to 
\begin{equation}\label{eq:alphaout_crossing}
\diagS^{out} = \sum_{i=1}^{\dimHB-m}{\diagS \diagHB_i}+
\sum_{i=1}^{m}{(1-\diagS) \diagHB_i}=\diagS  \delta_2+\delta_1.
\end{equation}

Now we get to the generalization of the 
theorem (\ref{thm:no-go}).

\begin{thm}\label{thm:cooling_power}
	The purification bound:\\
	Assume that we are given a target and an auxiliary system with 
	dimensions 2 and $\dimHB$ 
	respectively. Also assume that their initial states are given by
	$\bar{\lambda}\left( \rho_T \right) = \{\diagS, 1-\diagS\}$ and 
	$\bar{\lambda}\left( \rho_A \right) = \{\diagHB_1, \diagHB_2, \cdots,\diagHB_{\dimHB}\}$ and 
	that the state of the target element 
	after compression operation is given by 
	$\bar{\lambda}\left( \rho_{\Sys}^{out} \right) = 
	\{\alpha^{out}, 1-\alpha^{out}\}$. 
	Then
	\begin{equation}\label{eq:thm2}
	%\frac{\alpha_{out} }{ 1-\alpha_{out} } \leq \max \left( \frac{ \alpha }{ 1-\alpha }, \frac{ \beta_1 }{ \beta_d } \right).
	\diagS^{out} \leq \max \left(  \diagS , \frac{ \beta_1 }{\beta_1+\beta_d } \right). 
	\end{equation}
\end{thm}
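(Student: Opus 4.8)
The plan is to recast the claimed bound as an inequality between odds ratios. Since $x \mapsto x/(1-x)$ is strictly increasing on $[0,1)$, the bound (\ref{eq:thm2}) is equivalent to $\frac{\alpha^{out}}{1-\alpha^{out}} \le \max\!\left(\frac{\alpha}{1-\alpha},\frac{\beta_1}{\beta_d}\right)$. Starting from $\alpha^{out}=\alpha\delta_2+\delta_1$ in (\ref{eq:alphaout_crossing}) and using $\delta_1+\delta_2+\delta_3=1$, I would compute $1-\alpha^{out}=(1-\alpha)\delta_2+\delta_3$, so that
\begin{equation}
\frac{\alpha^{out}}{1-\alpha^{out}}=\frac{\delta_1+\alpha\delta_2}{\delta_3+(1-\alpha)\delta_2}.
\end{equation}
If $\beta_d=0$ the right-hand side of (\ref{eq:thm2}) equals $1$ and nothing is to be proved, so I would assume $\beta_d>0$.

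The key step would be the termwise comparison
\begin{equation}\label{eq:pairing}
\delta_1 \le \frac{\beta_1}{\beta_d}\,\delta_3 .
\end{equation}
To establish it I would reindex $\delta_1=\sum_{i=1}^{m}\beta_i$ and $\delta_3=\sum_{i=1}^{m}\beta_{d-m+i}$ and pair the $i$-th terms of the two sums. Because $\bar{\lambda}(\rho_A)$ is sorted in decreasing order and $1\le i\le d-m+i\le d$, one has $\beta_i\le\beta_1$ and $\beta_{d-m+i}\ge\beta_d$, hence $\beta_i\le(\beta_1/\beta_d)\,\beta_{d-m+i}$ for every $i$; summing over $i$ yields (\ref{eq:pairing}). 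This step uses only the ordering of the $\beta$'s and the block structure, not the precise value of the optimal crossing number $m$.

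Substituting (\ref{eq:pairing}) into the numerator and writing $\alpha\delta_2=\frac{\alpha}{1-\alpha}(1-\alpha)\delta_2$, I would bound
\begin{equation}
\frac{\alpha^{out}}{1-\alpha^{out}} \le \frac{\frac{\beta_1}{\beta_d}\,\delta_3+\frac{\alpha}{1-\alpha}\,(1-\alpha)\delta_2}{\delta_3+(1-\alpha)\delta_2},
\end{equation}
which is a weighted average of $\beta_1/\beta_d$ and $\alpha/(1-\alpha)$ with the non-negative weights $\delta_3$ and $(1-\alpha)\delta_2$. Since a weighted average of two numbers never exceeds their maximum, the desired ratio bound follows. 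Finally, applying the increasing map $y\mapsto y/(1+y)$ (which commutes with $\max$) and using $\frac{\beta_1/\beta_d}{1+\beta_1/\beta_d}=\frac{\beta_1}{\beta_1+\beta_d}$ together with $\frac{\alpha/(1-\alpha)}{1+\alpha/(1-\alpha)}=\alpha$ recovers (\ref{eq:thm2}).

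I expect the main obstacle to be spotting the pairing inequality (\ref{eq:pairing}) and the reformulation in terms of odds ratios; once these are in hand the conclusion is a one-line convexity argument. The remaining care is bookkeeping for degenerate configurations --- $m=0$ (no crossings, where $\alpha^{out}=\alpha$), an empty middle group with $\delta_2=0$, and vanishing $\beta_d$ --- each of which either reduces to $\alpha^{out}=\alpha$ or to the trivial bound $\alpha^{out}\le 1$.
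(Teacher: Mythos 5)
Your proposal is correct, and it rests on the same key inequality as the paper's proof: pairing the $m$ crossed-in elements against the $m$ crossed-out ones via $\beta_i\le\beta_1$ and $\beta_{d-m+i}\ge\beta_d$ to obtain $\delta_1\beta_d\le\beta_1\delta_3$ (this is precisely equation (\ref{eq:th2-midstep1}) summed, i.e.\ the first line of (\ref{eq:th2-midstep2})). Where you diverge is in how the conclusion is extracted. The paper splits into two cases according to whether $\alpha\gtrless\frac{\beta_1}{\beta_1+\beta_d}$: in the first case it argues $m=0$ so $\alpha^{out}=\alpha$, and in the second it adds $\beta_1(\delta_1+\delta_2+\delta_3)=\beta_1$ to the pairing inequality and rearranges to get $(\beta_1+\beta_d)\bigl(\delta_1+\tfrac{\beta_1}{\beta_1+\beta_d}\delta_2\bigr)\le\beta_1$, then uses monotonicity in $\alpha$. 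You instead pass to odds ratios, observe that $\frac{\alpha^{out}}{1-\alpha^{out}}=\frac{\delta_1+\alpha\delta_2}{\delta_3+(1-\alpha)\delta_2}$ is (after applying the pairing bound to the numerator) a weighted average of $\beta_1/\beta_d$ and $\alpha/(1-\alpha)$, and invoke the fact that a weighted average cannot exceed the maximum; the monotone map $y\mapsto y/(1+y)$ then returns the statement. This buys you a uniform treatment of both cases in one step and makes transparent why the bound is a $\max$ of exactly those two quantities; it also cleanly isolates where strictness can fail, which the paper needs separately in its tightness discussion. The cost is a little extra bookkeeping for the degenerate configurations ($\alpha=1$, $\beta_d=0$, $\delta_2=0$) where the odds ratio is undefined or the weights vanish, which you correctly flag but should spell out if this were written up: in each such case the claim reduces to $\alpha^{out}=\alpha$ or to the trivial $\alpha^{out}\le1$, so nothing is lost.
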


\begin{proof} 
	Without loss of generality, we assume that 
	the initial state is diagonal.	
	
	First, consider the situation where 
	\[\alpha \geq \frac{\beta_1}{\beta_1+\beta_{\dimHB}}
	\rightarrow  \alpha \beta_d \geq ( 1-\alpha) 
	 \beta_1.\]
	This indicates no crossing, i.e.  $m=0$ for which
	equation (\ref{eq:alphaout_crossing}) gives 
	$\diagS_{out} = \diagS$. 
	This is in agreement with equation (\ref{eq:thm2}). 
		
	For the case of $ \alpha \leq \frac{\beta_1}{\beta_1+\beta_{\dimHB}}$, 
	we start with pointing that  %, we start with 
	$\beta_j\leq \beta_1$
	and $\beta_{\dimHB}\leq \beta_{\dimHB-j+1}$ for all $j$. 
	Multiplying these two inequalities gives 
	\begin{equation}\label{eq:th2-midstep1}
	\beta_j \beta_{\dimHB}\leq \beta_1 \beta_{\dimHB-j+1}, \; \forall j.
	\end{equation}
	If we sum over the first $ m $ values of $ j $, we get 
	\[(\beta_1 + \cdots+\beta_{m}) \beta_d \leq \beta_1 (\beta_{d-m+1} + \cdots+\beta_{d}) \] 
	or
	\begin{align}\label{eq:th2-midstep2}
	& \delta_1 \beta_d  - \beta_1 \delta_3 \leq 0 \cr
	\Rightarrow & \beta_d \delta_1 - \beta_1 \delta_3 +\beta_1 \leq \beta_1 \cr
	\Rightarrow & \beta_d \delta_1 - \beta_1 \delta_3 +\beta_1 (\delta_1+\delta_2+\delta_3) = \cr
	%\Rightarrow & \beta_d \delta_1 - \beta_1 \delta_3 +\beta_1 = (\beta_1+\beta_d)\delta_1 + \beta_1\delta_2 \cr 
	 & (\beta_1+\beta_d)(\delta_1 + \frac{\beta_1}{\beta_1+\beta_d} \delta_2) \leq \beta_1 
	\end{align}
	Since $\alpha\leq \frac{\beta_1}{\beta_1+\beta_d}$, 
	we get 
	\begin{align}
	(\beta_1+\beta_d)(\delta_1 + \alpha \delta_2) 
	\leq (\beta_1+\beta_d)(\delta_1 + \frac{\beta_1}{\beta_1+\beta_d} \delta_2)\leq \beta_1\cr 
	\end{align}
	From equation (\ref{eq:alphaout_crossing}) of the main text, we get 
	\[\alpha_{out} \leq \frac{\beta_1}{\beta_1+\beta_d}, \]
	and this concludes the proof. 
\end{proof}

Similar to theorem (\ref{thm:no-go}), the result of theorem 
(\ref{thm:cooling_power}) can be rewritten as 
\begin{equation}
\epsilon\left( \rho^{out}_{T} \right) \leq \max 
\left( \epsilon\left( \rho_{T} \right), \varepsilon\left( \rho_{A} \right) \right),
\end{equation}
with $ \epsilon\left( \rho_{T} \right)$ and $ \varepsilon\left( \rho_{A} \right) $ the 
polarization of the target and auxiliary elements. 

This indicates that only the ratio of the largest 
to the smallest element of $ \lambda\left( \rho_A\right) $ affects the cooling bound. 
This is similar to the result from \cite{allahverdyan2011thermodynamic}.

This theorem extends the results of the first theorem to the situation where the reset element is a multi-level quantum system, i.e. a qudit. 
More specifically, it shows that no unitary operation can compress and transfer entropy for a system comprised of a qubit and qudit. 
Consider the example where the target is a qubit and the reset element is a qudit and initially, the qudit is more polarized i.e. $\epsilon_{\text{T}}\leq \varepsilon_{\text{R}}$. Theorem (2) shows that there is no unitary operation that can polarize the target qubit beyond $\varepsilon_{\text{R}}$. 

\subsubsection*{Unitarity limitations and HBAC limit}
%%% Connection/Interpretation of the Thm for the HBAC limit
These two theorems impose an upper bound for the 
recursive HBAC technique. 

To derive the upper bound, we assume that all the 
qubits are initially less polarized than the reset, i.e. 
$\epsilon(Q_i)\leq \epsilon_R$. 
For $ Q_2 $ the polarization that can be achieved with unitary compression is bounded by $\epsilon_R$, as indicated by theorem (\ref{thm:no-go}).
 
For $ Q_3 $, the combination of the two first qubits
serves as the auxiliary element. 
Theorem (\ref{thm:cooling_power}) indicates that 
$\epsilon(Q_3) \leq \varepsilon(\rho_{1,2}) = 2\epsilon_R$.
Next, the first three qubits make the auxiliary element for the fourth qubit with $\varepsilon(\rho_{1,2,3}) = 4\epsilon_R$. This also means that from each qubit to the next, the polarization bound doubles.
For $ Q_n $, the combination of the $n-1$ first 
qubits make the auxiliary element with 
$\varepsilon = 2^{n-2} \epsilon_R$.  
This gives a bound of  $ 2^{n-2} \epsilon_{R} $ for qubit $ Q_n $.
See figure (\ref{fig:HBAC_limit}).

The resulting bound coincides with the cooling limit 
of the Heat-Bath Algorithmic cooling techniques 
established in \cite{raeisi2015asymptotic, raeisi2019novel}. 
This also indicates that the algorithm described above presents a new method for HBAC that converges to the HBAC limit, although it is neither efficient nor practical \cite{raeisi2019novel}. 

But the interesting result is that 
any changes to the bounds in theorems 
(\ref{thm:no-go}) and (\ref{thm:cooling_power})  
would change the HBAC limit. 
To this end, assume that it was possible to find a 
unitary operation to compress the entropy and 
get $\epsilon_T^{out} = \gamma \max\left( \epsilon_T, 
\varepsilon_{A}\right)  $, with $ \gamma > 1 $ (in violation of the two theorems). 

Then if we follow the same steps as we did above 
the output polarization 
should be $ \epsilon_T^{out} = \gamma \epsilon_{R}$, 
which exceeds the limit of HBAC. 

Similarly, for the 
three-qubit HBAC, instead of the limit of $ 2\epsilon_R $, we would be able to increase the purity of $Q_3$ to $ \gamma(\gamma+1) \epsilon_R$ and for $ n $
qubits it would be possible to increase the polarization to
$ \gamma(\gamma+1)^{n-2}\epsilon_{0} $. This exceeds the limit of HBAC. 

More precisely, this shows that if there were a unitary operation that could violate the bounds in theorems (1) and (2), then it would have been possible to exceed the HBAC cooling limit.

%%% Connection/Interpretation of the Thm for the HBAC limit

\subsection*{Conclusion}

%%%Conclusion

In conclusion, we investigated the roots of the 
cooling limit of HBAC. 
We showed that unitary operations cannot compress entropy beyond the initial entropies of the target and reset elements. 
This means that using unitary compression for HBAC, it is not possible to increase the purity beyond the maximum of the individual purities. 
We proved this for both a qubit and a qudit reset. 
This means that the unitarity of the compression operation
imposes limits on the compression and we showed that these limits lead to the HBAC cooling limit. 
Specifically, we introduced a new HBAC algorithm and showed that without the limitations imposed by the unitarity of the compression operations, the new HBAC technique would exceed the 
limit of HBAC. But the restrictions of the unitarity lead exactly to the limit of HBAC. 
This shows that the root of the cooling 
limit of HBAC is in the unitarity of the compression operation. 

It is interesting to use these results to understand how non-unitary compression operations might help to improve HBAC beyond the current limit. 
In particular, with our algorithm, it is expected that 
if the unitary compression is replaced by a non-unitary operation where the restrictions of theorem (1) and (2) do not apply, the cooling limit of HBAC would no longer hold. It is also interesting to explore what families of completely positive and trace preserving (CPTP) maps can be practically used to improve beyond the current scope of HBAC and to understand how far the limit can be pushed. 

%%%Conclusion

%\subsection*{end}
\begin{acknowledgments}
	We would like to thank Michele Mosca for fruitful discussions. This work was supported by the research 
	grant system of Sharif University of Technology (G960219). %,
\end{acknowledgments}

\bibliographystyle{apsrev4-1}
\bibliography{Purification}

\appendix

\section{Lower bounds for compression}
It is not easy to lower bound the output polarization
in terms of the initial polarization of the target 
and the auxiliary element. 

As a simple example, let's consider the two qubit case. 
Assume that we are given two qubits 
for the target and the auxiliary systems with 
$\bar{\lambda}\left( \rho_{\Sys} \right) = \{\alpha, 1-\alpha\}$ and 
$\bar{\lambda}\left( \rho_{\A} \right) = \{\beta, 1-\beta\}$ and that the state of the target element  
after compression is given by 
$\bar{\lambda}\left( \rho_{\Sys}^{out} \right) = 
\{\alpha^{out}, 1-\alpha^{out}\}$.

The initial state of the full system is 
\[\lambda\left( \rho_{T,A}\right) =   \{\alpha\beta,\alpha(1-\beta), (1-\alpha)\beta, 
(1-\alpha)(1-\beta)\}.\]
Now consider a unitary compression operation that takes this to 
\begin{equation}\label{sm_eq:mixer}
\lambda\left( \rho^{out}_{T,A}\right) =   \{\alpha\beta,(1-\alpha)(1-\beta),\alpha(1-\beta), (1-\alpha)\beta\}.
\end{equation}
It is easy to see that 
$ \alpha^{out} =  \alpha\beta + (1-\alpha)(1-\beta)$ and 
it can be smaller than the minimum of $ \alpha $ and $ \beta $. 
Figure (\ref{fig:LowerBound}) shows $ \alpha^{out} - \min\left(  \alpha , \beta\right)  $ 
and it is clear that it is mostly negative. 

\begin{figure}
	\begin{centering}
		\includegraphics[width=\columnwidth]{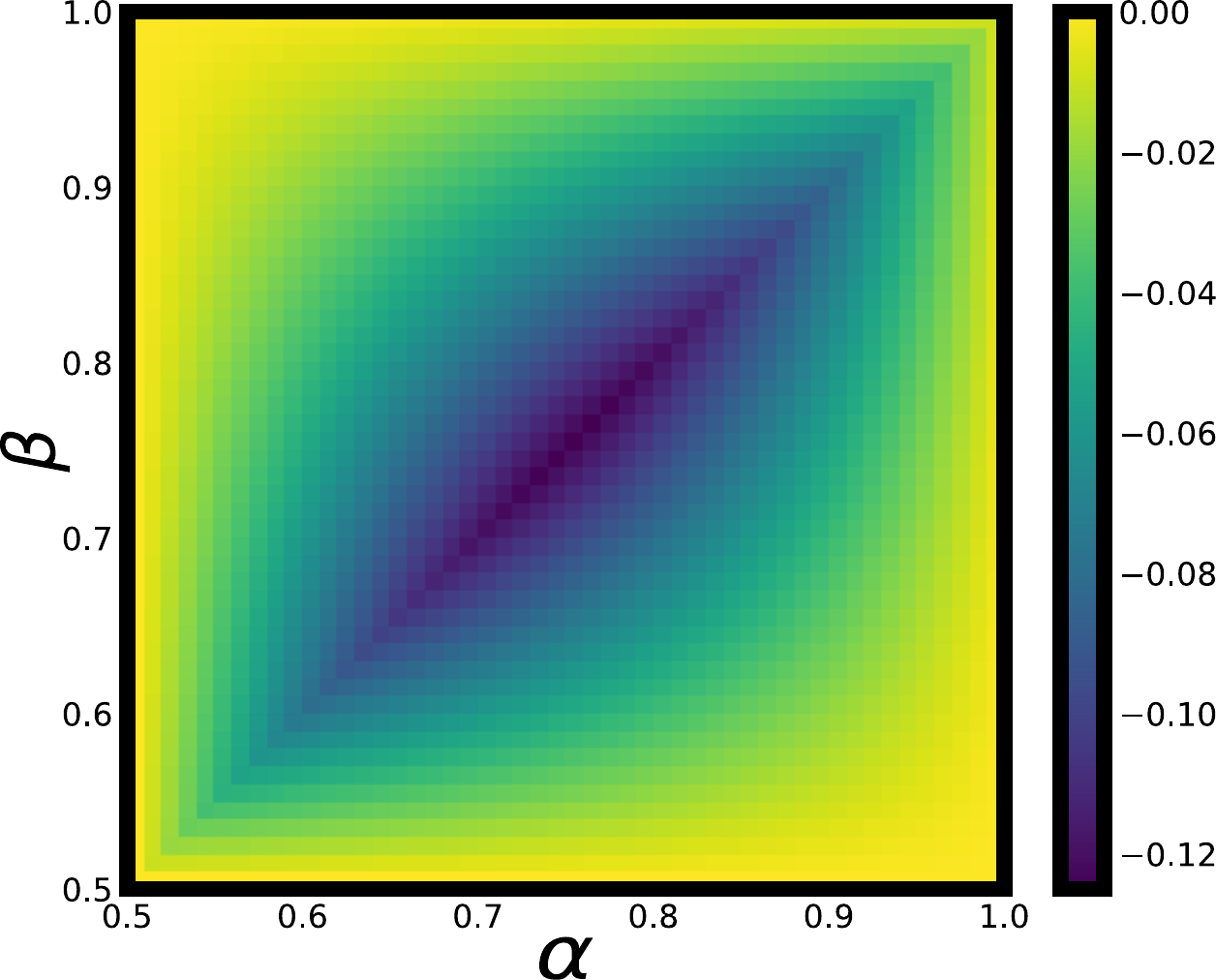}
		\par\end{centering}
	
	\caption{\label{fig:LowerBound}
		Lower-bound of purification: 
		It may seem 
		that the output polarization should be 
		lower-bounded  by the minimum of the initial 
		polarization of the target and reset qubits. 
		This plot shows that this is not true. It gives a 
		specific example of two qubit purification 
		with the purification operator explained 
		in the equation (\ref{sm_eq:mixer}). 
		On the axes are the largest eigenvalue, 
		$ \alpha , \beta $
		of the initial density matrix of the target
		and the auxiliary elements. The color-bar 
		shows the $ \alpha^{out} - 
		\min\left(  \alpha , \beta\right)  $ . 
		The negativity of the plot shows that 
		the output polarization 
		is, for the most part,  less than both the initial target 
		and auxiliary element. 
	}
\end{figure}

\section{Tightness of the bounds}
Next we discuss the tightness of the bounds. 

For the two qubits, the bound in theorem  (1) is tight. If the polarization of the auxiliary element is 
greater than the target, the states can be swapped and otherwise, the target is already at the bound of the purification. 

The bound in theorem  (2), 
is also tight. In figure  
(1-b) of the main text, 
the points on $ y=1 $ represent instances where the output polarization is equal to the maximum of the initial polarizations of the target and auxiliary elements. 

However, for theorem  (2) the bound is tight only when the initial polarization 
of the target is greater than or equal to the auxiliary element. 
However, if $  \epsilon\left( \rho_{T} \right) 
< \varepsilon\left( \rho_{A} \right) $, 
then we can back-track the steps of the proof and show that 
the output polarization is always less than the 
polarization of the auxiliary element, i.e. 
$ \epsilon\left( \rho^{out}_{T} \right) <
\varepsilon\left( \rho_{A} \right) $. For this
we can go back to the equation (11), 
where there is at least one value of 
$ j $ for which the inequalities are strict. 
Otherwise, all the $ \beta_j $ should be equal which gives a maximally mixed state for the auxiliary element and therefore it cannot have higher polarization than the target element. 
This means that the inequality in equation (12) of the main text and the final result should also be strict and 
as a result, for the situation where the 
auxiliary element is initially more polarized, 
the bound is no longer tight. 
It is also confirmed that all the points in figure (1-B) that saturate the limit ($ y=1 $), are cases where 
the target element is initially more polarized. 

For the open-system setting, 
Raeisi and Mosca showed in 
\cite{raeisi2015asymptotic} 
that HBAC asymptotically converges to 
this limit which indicates that the 
bound is asymptotically tight and  
it is possible to get arbitrarily close 
to the bound. For the proof of convergence 
see \cite{raeisi2015asymptotic} . 

\end{document}